\newcolumntype{R}[2]{%
    >{\adjustbox{angle=#1,lap=\width-(#2)}\bgroup}%
    l%
    <{\egroup}%
}
\theoremstyle{plain} 
\theoremstyle{plain} \newtheorem{proposition}{\textbf{Proposition}}
\theoremstyle{remark} 
\theoremstyle{plain} \newtheorem{theorem}{\textbf{Theorem}}
\theoremstyle{plain} 
\theoremstyle{plain} 
\theoremstyle{definition} \newtheorem{definition}{\textbf{Definition}}
\theoremstyle{plain} \newtheorem{conjecture}{\textbf{Conjecture}}
\newcommand{\rmv}[1]{}
\let\@@pmod\pmod
\DeclareRobustCommand{\pmod}{\@ifstar\@pmods\@@pmod}
\def\@pmods#1{\mkern4mu({\operator@font mod}\mkern 6mu#1)}
\newenvironment{breakablealgorithm}
  {% \begin{breakablealgorithm}
   \begin{center}
     \refstepcounter{algorithm}% New algorithm
     \hrule height.8pt depth0pt \kern2pt% \@fs@pre for \@fs@ruled
     \renewcommand{\caption}[2][\relax]{% Make a new \caption
       {\raggedright\textbf{\ALG@name~\thealgorithm} ##2\par}%
       \ifx\relax##1\relax % #1 is \relax
         \addcontentsline{loa}{algorithm}{\protect\numberline{\thealgorithm}##2}%
       \else % #1 is not \relax
         \addcontentsline{loa}{algorithm}{\protect\numberline{\thealgorithm}##1}%
       \fi
       \kern2pt\hrule\kern2pt
     }
  }{% \end{breakablealgorithm}
     \kern2pt\hrule\relax% \@fs@post for \@fs@ruled
   \end{center}
  }
\renewcommand{\algorithmicreturn}[1]{\bgroup\\  ~#1\egroup}
\renewcommand{\algorithmiccomment}[1]{\bgroup\hfill//~#1\egroup}
\title{Finding linearly generated subsequences}
\author{Claude Gravel}
\affil{\stackunder{\small{National Institute of Informatics}}{\stackunder{\small{Tokyo, Japan}}{\stackunder{\small{\texttt{claudegravel1980@gmail.com}}}{\small{Currently working at EAGLYS (Tokyo, Japan)}}}}}
\author{Daniel Panario}
\affil{\stackunder{\small{School of Mathematics and Statistics}}{\stackunder{\small{Carleton University, Canada}}{\mbox{\small{\texttt{daniel@math.carleton.ca}}}}}}
\author{Bastien Rigault}
\affil{\stackunder{\small{National Institute of Informatics}}{\stackunder{\small{Tokyo, Japan}}{\mbox{\small{\texttt{rgaultb@gmail.com}}}}}}
\date{\today}
\begin{document}

\maketitle

\begin{abstract}
We develop a new algorithm to compute determinants of all possible Hankel matrices made up from a given finite length sequence over a finite field. Our algorithm fits within the dynamic programming paradigm by exploiting new recursive relations on the determinants of Hankel matrices together with new observations concerning the distribution of zero determinants among the possible matrix sizes allowed by the length of the original sequence. The algorithm can be used to isolate \emph{very} efficiently linear shift feedback registers hidden in strings with random prefix and random postfix for instance and, therefore, recovering the shortest generating vector. Our new mathematical identities can be used also in any other situations involving determinants of Hankel matrices. We also implement a parallel version of our algorithm. We compare our results empirically with the trivial algorithm which consists of computing determinants for each possible Hankel matrices made up from a given finite length sequence. Our new accelerated approach on a single processor is faster than the trivial algorithm on 160 processors for input sequences of length 16384 for instance.

\textbf{Keywords: } generating polynomial, linear algebra over finite fields, quotient-difference tables, Hankel matrices, linear shift feedback registers, pattern substrings, Berlekamp-Massey algorithm
\end{abstract}

\section{Notation, facts and definitions}\label{sect:intro}

Let $q$ be a prime power, $n>0$, and ${\bf x}=(x_i)_{i=0}^{n-1}\in\mathbb{F}^{n}_{q}$. For integers $1\leq j\leq\big\lceil{\frac{n}{2}}\big\rceil$ and $j-1\leq i< n-j+1$, define the matrix $\mathbf{X}_{i,j}$ by
\begin{align}
\mathbf{X}_{i,j}&=
\left(\begin{array}{cccc}
x_{i}     & \ldots  & x_{i+j-2} & x_{i+j-1}\\
x_{i-1}   & \ldots  & x_{i+j-3} & x_{i+j-2}\\
\vdots    & \ddots  & \vdots    & \vdots\\
x_{i-j+1} & \ldots  & x_{i-1}   & x_{i}
\end{array}\right).\label{matrepxij}
\end{align}
By convention, we let $\mathbf{X}_{i,0}=1$ for $0\leq i<n$. Every matrix
$\mathbf{X}_{i,j}$ is a Hankel matrix.

Hankel matrices have a large number of applications in applied mathematics. In this paper we are interested in Hankel matrices over finite fields. We explore the well known connection of Hankel matrices and sequences over finite fields; for an introductory explanation see Section 8.6 in \cite{lidl_niederreiter_1996}. Hankel matrices are strongly connected to coprime polynomials over finite fields. Indeed, the probability of two monic polynomials of positive degree $n$ over $\mathbb{F}_{q}$ to be relatively prime is the same as the uniform probability for an $n \times n$ Hankel matrix over $\mathbb{F}_{q}$ be nonsingular \cite{GaoPanario2004,Armas2011}. Elkies \cite{Elkies2002} studies the probability of Hankel matrices over finite fields be nonsingular when independent biased entries are used for the matrix. An algorithm to generate a class of Hankel matrices called superregular (that are related to MDS codes) is given in \cite{Raemy2015}. Finally, we point out that several results and  applications of Hankel matrices over finite fields are given in the Handbook of Finite Fields \cite{MullenPanario2013}. In particular, see Section 13.2 for enumeration and classical results, Section 14.8 for connections to $(t,m,s)$-nets, and Section 16.7 for hardware arithmetic for matrices over finite fields. In this paper, we give a new algorithm to compute determinants of all possible Hankel matrices made up from a given finite length sequence over a finite field.

We denote by $d_{i,j}$ the determinant of $\mathbf{X}_{i,j}$. By definition of
$\mathbf{X}_{i,j}$, we have for all $i$
\begin{displaymath}
d_{i,0}=1,\quad d_{i,1}=x_{i},\quad d_{i,j}=\det{\mathbf{X}_{i,j}}.
\end{displaymath}
For convenience, let $h = \big\lceil n/2\big\rceil$. We use the determinants to form a quotient-difference table \cite{Henrici1967,SloanePlouffe1995}. If $h$ is odd, the determinants
form a triangle:
\begin{displaymath}
\begin{array}{rlccccccccc}
0       &:& 1     & 1       & \ldots  & 1      & 1                   & 1                     & \ldots                 & 1         & 1    \\
1       &:& x_{0} & x_{1}   & \ldots  & \ldots & x_{h}               & \ldots                & \ldots                 & x_{n-2}   & x_{n-1}\\
2       &:&       & d_{2,1} & d_{2,2} & \ldots & \ldots              & \ldots                & d_{2,n-3}              & d_{2,n-2} &      \\
3       &:&       &         & d_{3,2} & \ldots & \ldots              & \ldots                & d_{3,n-3}              &           &      \\
4       &:&       &         &         & \ddots & \vdots              & \reflectbox{$\ddots$} &                        &           &      \\
\vdots  & &       &         &         & \ddots & \vdots              & \reflectbox{$\ddots$} &                        &           &      \\
h       &:&       &         &         &        & d_{h,h}             &                       &                        &           &
\end{array}
\end{displaymath}
If $n$ is even, then the triangle is truncated at the $h$th level where there are two elements $d_{h,h}$ and $d_{h,h+1}$. We observe that $i$ refers to columns and $j$ refers to rows.

For integers $i_0$, $i_1$, $j_0$, $j_1$ such that $i_1 > i_0$, $h \geq j_1 - j_0 > 0$, consider the set $S(i_0,i_1,j_0,j_1)=\{(i,j)\in\mathbb{N}\times\mathbb{N}\mid i_0\leq i\leq i_1,\quad j_0\leq j\leq j_1,\quad j_1-1\leq i < n-j_1+1\}$. We observe that $S$ is nonempty and may have a $k$-side polygonal shape with $3\leq k\leq 6$. Section \ref{sect:illusexa} gives two examples, one with $n=32$, $k=6$, and one with $n=81$, $k=4$. We see that the tip of the triangular table from the example with $n=32$ has length two, and therefore it yields to an hexagonal case. For a detailed explanation, see Section \ref{sect:illusexa}. If $S$ falls entirely inside the table, then $k=4$ necessarily, that is, we have a square of zeros. If $S$ overlaps with the edges of the triangular table, then $k$ may be different than $4$. We use $\partial S$ to denote the boundary of $S$. We prove in this work that zeros in a difference table are \emph{always} distributed or grouped according to $S$.

Our goal is to design a dynamic programming algorithm to fill the table that requires the least number of determinant evaluations. More precisely, if we know the first $j-1$ rows of the table, then we want to compute determinants for the $j$th row by using the least possible number of rows above the $j$th. In Section \ref{sect:results}, we establish relations among determinants $d_{i,j}$'s no matter how ${\bf x}$ is generated. Our results amplify any linear patterns that could be used to generate the coordinates of ${\bf x}$. We show that any run of zeros in the table automatically implies a run of zeros exactly below the former so that we obtain a square of zeros. Moreover, we prove identities, that we call \emph{cross shape identities}, relating determinants $d_{i,j}$'s located on a cross as explained later; those identities are based on Sylvester's identities, generalized by Bareiss \cite{Bareiss1968}, as well as Dogson's identity \cite{ABELES2014130, HornJohnson_MatAnalysis}.

It would be possible to avoid the evaluations of determinants of matrices by generalizing determinantal identities given in Conjecture \ref{conj:deteq} from Section \ref{sect:results}. More precisely, in a true random sequence of length $n$, the expected length of the maximum run of zeros is $O(\log_{2}{n})$. Therefore using the recursive nature of determinants, and especially determinants of Hankel matrices, we conjecture that the evaluations of determinants of matrices larger than about $O(\log_{2}{n})$ are not required to complete the table above which would lead to a linear time algorithm to locate the linear subsequence. Our algorithm may also be used as a statistical test to determine linearity in a pseudo-random sequence.

In Section $3$, we apply results from Section \ref{sect:results} to the case of a sequence ${\bf x}=(x_{i})_{i=1}^{n}$ that contains a linearly shifted and fed back subsequence.
\begin{definition}\label{defn:linsubseq}
Using ${\bf x}=(x_i)_{i=0}^{n-1}$ as above, let $c=(c_0,\ldots, c_{d-1})\in \mathbb{F}_{q}^{d}$ with $c_{d-1}=1$ and $d<n-1$. The sequence ${\bf x}$ contains a \emph{linear subsequence} if there are integers $s$ and $t$ with $d\leq s \leq t < n$ such that for all $s\leq \ell \leq t$ we have
\begin{displaymath}
\sum_{i=0}^{d-1}{c_{i}x_{\ell-d+i}}=0.
\end{displaymath}
\end{definition}
Indeed one of our motivations is to identify the indices $s$ and $t$ as well as to find the generating vector $c$. This relates our work to the Berlekamp-Massey algorithm. As shown later our method does not assume any upper bound on the length of $c$ or equivalently on the degree of the generating polynomial in the framework of Berlekamp-Massey.

Given a prime power $q$, $d>0$, and a sequence ${\bf x}=(x_i)_{i=0}^{\infty}$ with $x_i\in\mathbb{F}_{q}$, the Berlekamp-Massey algorithm is an iterative algorithm that finds the shortest linear feedback shift register (LFSR) that generates $\bf{ x}$. A register of size $d$ over $\mathbb{F}_{q}$ is an element of $\mathbb{F}_{q}^{d}$. More precisely, an LFSR consists of an initial register $(x_0,x_1,\ldots,x_{d-1})\in\mathbb{F}_{q}^{d}$, a non zero vector $c=(c_0,\ldots,c_{d-1})\in\mathbb{F}_{q}^{d}$ such that for $i\geq0$
\begin{equation}
(x_i,x_{i+1},\ldots, x_{i+d-2}, x_{i+d-1})\longrightarrow (x_{i+1},x_{i+2},\ldots,x_{i+d-1},\sum_{j=0}^{d-1}{c_{j}x_{i+j}}).\label{mbeqn1}
\end{equation}
The arrow in Equation (\ref{mbeqn1}) expresses the transition. The state of a system at a point in time is the content of the register. In Equation (\ref{mbeqn1}), the system transits from the state $(x_i,x_{i+1},\ldots, x_{i+d-2}, x_{i+d-1})$ to $(x_{i+1},x_{i+2},\ldots,x_{i+d-1},x_{i+d})$ where $x_{i+d}$ is a given as linear combination of the previous $x_i$'s. The content of the register at time $i+d-1$ is being fed back into the right end of it through the linear combination $\sum_{j=0}^{d-1}{c_{j}x_{i+j}}$. At time $i+d$, the register is updated to $(x_{i+1},x_{i+2},\ldots,x_{i+d-1},x_{i+d})$ where $x_{i+d}=c_{0}x_{i}+c_{1}x_{i+1}+\cdots + c_{d-1}x_{i+d-1}$.

For more information on the Berlekamp-Massey algorithm, see \cite{blahut_2003} where  interesting connections between this algorithm and the extended Euclidean algorithm are given. LaMacchia and Odlyzko \cite{lamacchiaodlyzko1990} also review how Berlekamp-Massey algorithm is used in the Wiedemann algorithm to find linear recurrences over finite fields and also show interesting connections to determinants of Hankel matrices. For more information on LFSR sequences, see \cite{Golomb_1981,golomb_gong_2005}.

We conclude the section giving the structure of the paper. Section \ref{sect:results} gives several theoretical relations among Hankel determinants that are crucial in this paper. Those relations are used in Section \ref{sect:algorithm} where we provide our algorithm to compute all determinants from Hankel matrices. Illustrative examples over $\mathbb{F}_2$ are given in Section \ref{sect:illusexa}. Due to the lack of space, experimental runs of our algorithm against a standard method of computation are given in Appendix \ref{sect:empres}. We compare our results empirically with the trivial algorithm which consists of computing determinants for each possible Hankel matrices made up from a given finite length sequence. Our new accelerated approach on a single processor is faster than the trivial algorithm on 160 processors for input sequences of length 16384 for instance as shown in Section \ref{sect:concl}.

\section{Relations among Hankel determinants}\label{sect:results}

In this section, we derive useful results to allow the computations of $d_{i,j}$ without actually computing explicitly or directly determinants of size $j$ and instead using determinants $d_{i,j'}$ with $j<j'$. Then in Section \ref{sect:algorithm}, we fill the triangular table using a dynamic programming approach. Before that, let us recall one of the results from \cite{Bareiss1968} applied to Hankel matrices and adapted to our notation. If $i$, $j$ are such that $i_0<i<i_1$, $j_0\leq j\leq j_0+(i_1-i_0-1)$, and with the convention that $d_{i,0}=1$, $d_{i,1}=x_i$, then
\begin{align}
d_{i,j}d_{i,j_{0}-1}^{j-j_{0}}&=\det\left(
\begin{array}{lcl}
d_{i,j_{0}} & \ldots & d_{i+j-j_0,j_{0}}\\
\vdots &\ddots & \vdots\\
d_{i-(j-j_0),j_{0}} & \ldots & d_{i,j_{0}}
\end{array}\right).\label{jthstepid}
\end{align}
Equation (\ref{jthstepid}) is called a $j$th-step integer preserving identity in \cite{Bareiss1968}. We call an identity like in Equation (\ref{jthstepid}) a cross shape identity because $d_{i,j}$, $d_{i,j_0}$ and $d_{i,j_{0}-1}$ are located on the vertical part of a cross, and the other non-diagonal elements of the matrix are located on the horizontal part of the aforementioned cross. A visual representation of Equation (\ref{jthstepid}) is as follow:
\[\arraycolsep=2.0pt
\begin{array}{llccccccccccc}
0      &:& 1   & 1      & \ldots & 1                                    & \ldots & 1             & \ldots                & 1           &
\ldots &   1      &      1  \\
1      &:& x_0 & x_1    & \ldots & x_{i-j+j_0}                          & \ldots & x_{i}         & \ldots                & x_{i+j-j_0} & \ldots & x_{n-2} & x_{n-1}\\
\vdots & &     & \ddots &        & \vdots                               &        & \vdots        & &                     \vdots                & &\reflectbox{$\ddots$}   &\\
\cline{8-8}
j_{0}-1&:&     &        &\ddots  &                                      &        & \multicolumn{1}{|c|}{d_{i,j_0-1}}     &                       & &\reflectbox{$\ddots$}&&\\
\cline{6-7}\cline{9-10}
j_0    &:&     &        &        & \multicolumn{1}{|c}{d_{i-j+j_0,j_0}} & \ldots & d_{i,j_0}     & \ldots                & \multicolumn{1}{c|}{d_{i+j-j_0,j_0}}                                                                                                   &             &        \\
\cline{6-7}\cline{9-10}
\vdots & &     &        &        &                                      &        &\multicolumn{1}{|c|}{\vdots}&                &             &
&        \\
\vdots & &     &        &        &                                      &        &\multicolumn{1}{|c|}{\vdots}&                &             &
&        \\
j      &:&     &        &        &                                      &        &\multicolumn{1}{|c|}{d_{i,j}}&                &&\\
\cline{8-8}
\vdots & &     &        &        &                                      &        &\vdots                       &                &             &
&        \\
\end{array}
\]
\normalsize
We come back to Equation (\ref{jthstepid}) at the end of this section with a brief explanation of its proof. Equation (\ref{jthstepid}) remains valid even if $d_{i,j_{0}-1}=0$ as pointed in \cite{Bareiss1968}.

\begin{theorem}\label{thm:carreauxdezeros}
Let $i_0< i < i_1$, and $j_0$ be such that $d_{i,j_{0}}\neq 0$, $d_{i,j_{0}+1}= 0$, $d_{i_0,j_0+1}\neq 0$, $d_{i_1,j_0+1}\neq 0$. Then with $j_1=j_0+(i_1-i_0)$ and $S(i_0,i_1,j_0,j_1)$ non-empty, we have
\begin{align*}
d_{i,j}&=0\quad\text{for all $(i,j)\in S(i_0,i_1,j_0,j_1)$},\\
d_{i,j}&\neq 0 \quad \text{for all $(i,j)\in \partial S(i_0,i_1,j_0,j_1)$.}
\end{align*}
\end{theorem}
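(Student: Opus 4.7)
The central tool is Bareiss's identity~(\ref{jthstepid}), applied with reference row $j_0 + 1$, which for every admissible $(i, j)$ with $j > j_0$ reads
\[
d_{i,j}\cdot d_{i,j_0}^{\,j - j_0 - 1} \;=\; \det M_{i,j}, \qquad (M_{i,j})_{r,c} \;=\; d_{i + c - r,\,j_0 + 1},\quad r,c \in \{0,\ldots,j - j_0 - 1\}.
\]
The plan is to read off both the vanishing inside $S$ and the nonvanishing on $\partial S$ from the shape forced on $M_{i,j}$ by the zero run in row $j_0 + 1$. (I read the conclusion as ``$d_{i,j} = 0$ on the interior of $S$ and $d_{i,j} \neq 0$ on $\partial S$''; the two statements are plainly meant to be exclusive.)

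For $(i, j)$ in the interior of $S$, set $k = j - j_0$. A direct check verifies that the integer interval $[\max(0,\, i + k - i_1),\ \min(k - 1,\, i - i_0 - 1)]$ is nonempty, and for any $r_0$ in it every column index $i + c - r_0$ of row $r_0$ of $M_{i,j}$ lies in $[i_0 + 1,\ i_1 - 1]$, so the corresponding entry vanishes by hypothesis. Thus $\det M_{i,j} = 0$, and since $d_{i, j_0} \neq 0$ the identity forces $d_{i, j} = 0$. For the left side $i = i_0$ at $j = j_0 + k$ with $1 \leq k \leq i_1 - i_0$, every entry $d_{i_0 + c - r, j_0+1}$ of $M_{i_0, j}$ with $c > r$ has column index in $[i_0 + 1,\ i_0 + k - 1] \subseteq [i_0 + 1,\ i_1 - 1]$ and hence vanishes, while the diagonal $c = r$ is constant equal to $d_{i_0, j_0+1} \neq 0$. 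So $M_{i_0, j}$ is lower-triangular with nonzero diagonal, $\det M_{i_0, j} = d_{i_0, j_0+1}^k \neq 0$, and the identity yields $d_{i_0, j} \neq 0$ on the whole left side of $\partial S$, plus as a free byproduct $d_{i_0, j_0} \neq 0$ (taking $k \geq 2$, which is available because $i_1 \geq i_0 + 2$). The right side of $\partial S$ is symmetric, and the top side $j = j_0$ reduces to the standing hypothesis together with the two corner determinants just produced.

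The remaining piece is the bottom side $j = j_1$ at an interior column $i = i_0 + p$, where $k = i_1 - i_0$ and $1 \leq p \leq k - 1$. I split the $k \times k$ matrix $M_{i_0 + p, j_1}$ into four blocks by partitioning rows at $p$ and columns at $k - p$. Using $k = i_1 - i_0$, one checks that every column index $i_0 + p + c - r$ appearing in the top-left $p \times (k - p)$ block or in the bottom-right $(k - p) \times p$ block lies in $[i_0 + 1,\ i_1 - 1]$, so both blocks vanish entirely; the top-right $p \times p$ block is upper-triangular with diagonal constantly $d_{i_1, j_0 + 1}$, and the bottom-left $(k - p) \times (k - p)$ block is lower-triangular with diagonal constantly $d_{i_0, j_0 + 1}$ (the unknown off-diagonal entries coming from indices outside $[i_0, i_1]$ are harmless because they sit on the triangle-zero side). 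Consequently
\[
\det M_{i_0 + p, j_1} \;=\; (-1)^{p(k - p)}\, d_{i_1, j_0 + 1}^{\,p}\, d_{i_0, j_0 + 1}^{\,k - p} \;\neq\; 0,
\]
and Bareiss's identity delivers $d_{i_0 + p, j_1} \neq 0$; the two corners of the bottom side are already handled by Step~2.

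The main obstacle is this last block decomposition: one has to recognise that the squareness $j_1 - j_0 = i_1 - i_0$ conspires with the length of the zero run to make exactly the top-left and bottom-right corner blocks of $M_{i_0 + p, j_1}$ collapse to zero, turning an a priori opaque determinant into an anti-block-diagonal one with triangular nonzero blocks. Once this is spotted, everything reduces to elementary index bookkeeping, and the whole proof becomes a dichotomy for $\det M_{i, j}$: either the zero run in row $j_0 + 1$ sweeps an entire row of $M_{i,j}$ (giving vanishing in the interior of $S$), or it forces the anti-block-diagonal triangular shape (giving nonvanishing on $\partial S$).
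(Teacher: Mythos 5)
Your argument is correct under the only sensible reading of the statement (zeros on the interior of $S$, nonzeros on the surrounding layer $\partial S$), and for the interior it is essentially the paper's own argument: both apply the cross-shape identity (\ref{jthstepid}) and observe that the run of zeros on the level below $j_0$ sweeps an entire row of the resulting matrix of determinants, so $\det M_{i,j}=0$ and hence $d_{i,j}=0$ because $d_{i,j_0}\neq 0$. (The paper writes this step with the zero run placed on level $j_0$ and the factor $d_{i,j_{0}-1}^{j_{0}-w}$, an index shift relative to the statement which tacitly uses the nonvanishing of $d_{i,j_0-1}$; your choice of reference level $j_0+1$ matches the hypotheses exactly and avoids that shift.) Where you genuinely go beyond the paper is the boundary claim: the paper's proof stops after the vanishing argument and never establishes $d_{i,j}\neq 0$ on $\partial S$. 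Your treatment of the left and right edges (the Bareiss matrix becomes triangular with constant nonzero diagonal $d_{i_0,j_0+1}$, respectively $d_{i_1,j_0+1}$), of the corners (read off from the nonvanishing of the product $d_{i_0,j}\,d_{i_0,j_0}^{\,k-1}$, using that the identity remains valid even when the reference determinant vanishes, as the paper notes after (\ref{jthstepid})), and of the bottom edge (the anti-block-diagonal decomposition with one upper- and one lower-triangular block, giving $\det M=(-1)^{p(k-p)}d_{i_1,j_0+1}^{\,p}\,d_{i_0,j_0+1}^{\,k-p}\neq 0$) is new relative to the paper and supplies exactly the half of the theorem that the paper asserts without proof. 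Like the paper, you only treat the case where $S$ lies entirely inside the table; the degenerate shapes at the table edges are set aside in both arguments.
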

\begin{proof}
Without loss of generality, assume that $S$ falls entirely inside the table with left and right boundaries at $(i_0,j_0)$ and $(i_1,j_0)$, respectively, and with upper and lower boundaries at $(i_0,j_0)$ and $(i_0,j_1)$, respectively. To fall entirely inside the table, one must have $2(i_{0}+1)-i_{1}\geq 0$ so that the Hankel matrix $\mathbf{X}_{i_{0},i_{1}-i_{0}-1}$ is properly defined; the number of consecutive zeros on level $j_0$ that occur between $i_0$ and $i_1$ is $i_1-i_0-1$.

Fix $i$ such that $i_0<i<i_1$ and let $j_0\leq w\leq j_0+(i_1-i_0-1)$. Then using Equation (\ref{jthstepid}) with $w=j_0+1$ as the basis for induction, we obtain that $d_{i,j_0+1}=0$, that is, we obtain the second row of zeros below the first one. For the inductive step, assume that $d_{i,w'}=0$ for $j_{0}\leq w'<w$, and rewrite Equation (\ref{jthstepid}) as
\begin{displaymath}
d_{i,w}=d_{i,j_{0}-1}^{j_{0}-w}\det\left(
\begin{array}{lcl}
d_{i,j_{0}} & \ldots & d_{i+w-j_0,j_{0}}\\
\vdots & \ddots & \vdots\\
d_{i+j_0-w,j_{0}} & \ldots & d_{i,j_{0}}
\end{array}\right).
\end{displaymath}
Therefore at least one row of the previous matrix is made only of zeros which implies the desired result.
\end{proof}

We remark that Theorem \ref{thm:carreauxdezeros} does not depend on the input sequence, and it is solely a property of determinants for Hankel matrices. If for instance the input sequence is chosen entirely at random with independent identically unbiased distributed Bernoulli random variables, then the biggest squares have average side length $O(\log_{2}{n})$ which is the expected length of the longest run of zeros in a random sequence of Bernoulli random variables with length $n$.

Given a square matrix $\mathbf{X}$ of size $\ell\times \ell$, we consider its sub-matrix $\mathbf{C}$ of size $(\ell-2)\times (\ell-2)$ located in the center $\mathbf{X}$, and its $4$ sub-matrices $\mathbf{N}$, $\mathbf{S}$, $\mathbf{E}$ and $\mathbf{W}$ of size $(\ell-1)\times(\ell-1)$ located in the top left, bottom right, top right and bottom left of $\mathbf{X}$, respectively. In other words let
\begin{align*}
\mathbf{X}&=\left(\begin{array}{c|c|c}
x_{1,1} & \ldots & x_{1,\ell}\\
\hline
\vdots & \mathbf{C} & \vdots\\
\hline
x_{\ell,1} & \ldots & x_{\ell,\ell}
\end{array}\right)\\
&=\left(
\begin{array}{c|r}
    \mathbf{\text{\Large{N}}} & \begin{matrix} x_{1,\ell} \\ \vdots\end{matrix} \\ \hline
    \begin{matrix} x_{\ell,1} & \ldots \end{matrix} & x_{\ell,\ell}
\end{array}
\right)=\left(
\begin{array}{l|c}
    x_{1,1} & \begin{matrix} \ldots & x_{1,\ell}\end{matrix} \\ \hline
    \begin{matrix} \vdots \\ x_{\ell,1} \end{matrix} & \mathbf{\text{\Large{S}}}
    \end{array}
\right)\\
&=\left(\begin{array}{c|r}
    \begin{matrix} x_{1,1} &\ldots \end{matrix} & x_{1,\ell} \\ \hline
    \mathbf{\text{\Large{W}}} & \begin{matrix} \vdots \\ x_{\ell,\ell} \end{matrix}
    \end{array}
\right)=\left(
\begin{array}{l|c}
    \begin{matrix} x_{1,1} \\ \vdots \end{matrix} & \mathbf{\text{\Large{E}}} \\ \hline
    x_{\ell,1} & \begin{matrix} \ldots & x_{\ell,\ell}\end{matrix}
\end{array}
\right).
\end{align*}
Then we have Dodgson's identity (see \cite{ABELES2014130}, or page $29$ of \cite{HornJohnson_MatAnalysis}):
\begin{align}
\det(\mathbf{X})\det(\mathbf{C})&=\det(\mathbf{N})\det(\mathbf{S})-\det(\mathbf{E})\det(\mathbf{W})\label{machinNSEW}.
\end{align}
If the entry $x_{\ell,\ell}$ is an unknown and all other elements of $\mathbf{X}$ are known, then, for some $\alpha,\beta\in\mathbb{F}_{q}$, we have that
\begin{align}
\big(x_{\ell,\ell}\det(\mathbf{N})+\alpha\big)\det(\mathbf{C})=\det(\mathbf{N})\big(x_{\ell,\ell}\det(\mathbf{C})+\beta\big)-\det(\mathbf{E})\det(\mathbf{W}).\label{facto1}
\end{align}
Equation (\ref{facto1}) implies that $x_{\ell,\ell}$ cannot be determined if $\det(\mathbf{N})=0$ or $\det(\mathbf{C})=0$. This simply implies that $x_{\ell,\ell}$ cannot be determined from a determinantal equation of the type obtained by Dodgson's identity.

We now derive a useful identity using Equation (\ref{machinNSEW}) which can also be proved using results from \cite{Bareiss1968}.
\begin{proposition}[North-South-East-West]\label{prop:1step}
For all $(i,j)$ such that $i-j+1\geq 0$, and $2\leq j\leq\big\lceil n/2 \big\rceil$ the following identity is true:
\begin{displaymath}
d_{i,j}d_{i,j-2}=d_{i,j-1}^2-d_{i+1,j-1}d_{i-1,j-1}.
\end{displaymath}
\end{proposition}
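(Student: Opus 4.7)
The plan is to deduce the identity directly from Dodgson's identity (Equation (\ref{machinNSEW})), applied to $\mathbf{X}=\mathbf{X}_{i,j}$, by recognizing each of the five relevant submatrices as another Hankel matrix from the same family. The restriction $i-j+1\ge 0$ and $j\ge 2$ is exactly what ensures that all five matrices appearing below are well-defined, and that the indices $i\pm 1, j-1, j-2$ stay in the allowed range of the triangular table.

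The key observation is the formula for the entries of $\mathbf{X}_{i,j}$: indexing rows $r$ and columns $c$ by $0,\ldots,j-1$, the $(r,c)$ entry is $x_{i-r+c}$. Using this, I verify by a straightforward change of variables that the four corner $(j-1)\times(j-1)$ submatrices and the central $(j-2)\times(j-2)$ submatrix of $\mathbf{X}_{i,j}$ fit the Hankel template of Equation (\ref{matrepxij}):
\begin{align*}
\mathbf{N} &= \mathbf{X}_{i,j-1}, & \mathbf{S} &= \mathbf{X}_{i,j-1}, & \mathbf{E} &= \mathbf{X}_{i+1,j-1}, & \mathbf{W} &= \mathbf{X}_{i-1,j-1}, & \mathbf{C} &= \mathbf{X}_{i,j-2}.
\end{align*}
The identifications of $\mathbf{E}, \mathbf{W}, \mathbf{C}$ each require a single index shift ($c\mapsto c-1$, $r\mapsto r-1$, or both). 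The identification $\mathbf{N}=\mathbf{S}$ is the one I expect to be the least obvious step, since at a glance $\mathbf{N}$ and $\mathbf{S}$ sit in opposite corners; it works because the entry depends only on the difference $c-r$, which is unchanged when one shifts both row and column indices by $+1$.

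Once these five equalities are in hand, the proposition is immediate: substituting into Dodgson's identity (\ref{machinNSEW}) yields
\begin{displaymath}
d_{i,j}\,d_{i,j-2} \;=\; d_{i,j-1}\cdot d_{i,j-1} \;-\; d_{i+1,j-1}\,d_{i-1,j-1} \;=\; d_{i,j-1}^{2}-d_{i+1,j-1}\,d_{i-1,j-1}.
\end{displaymath}
For the boundary value $j=2$, the central matrix $\mathbf{C}$ is the empty $0\times 0$ matrix; using the stated convention $d_{i,0}=1$ recovers the standard expansion $d_{i,2}=x_i^2-x_{i+1}x_{i-1}$ of a $2\times 2$ determinant, so no separate argument is needed.
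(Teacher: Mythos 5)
Your proof is correct and follows essentially the same route as the paper: the paper's own proof is a one-line application of Dodgson's identity (\ref{machinNSEW}) to $\mathbf{X}_{i,j}$ with exactly the identifications $\det(\mathbf{N})=\det(\mathbf{S})=d_{i,j-1}$, $\det(\mathbf{E})=d_{i+1,j-1}$, $\det(\mathbf{W})=d_{i-1,j-1}$, $\det(\mathbf{C})=d_{i,j-2}$. You merely spell out the index-shift verifications (and the $j=2$ convention $d_{i,0}=1$) that the paper leaves implicit, which is fine.
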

\begin{proof}
Apply Equation (\ref{machinNSEW}) on the matrix $\mathbf{X}_{i,j}$ given from (\ref{matrepxij}) where $\det(\mathbf{W})=d_{i-1,j-1}$, $\det(\mathbf{E})=d_{i+1,j-1}$, $\det(\mathbf{N})=d_{i,j-1}$, $\det(\mathbf{S})=d_{i,j-1}$, and $\det(\mathbf{C})=d_{i,j-2}$.
\end{proof}

We observe that Proposition \ref{prop:1step} is reminiscent to the North-South-East-West identity \cite{WolframQuoDif} for quotient-difference table. Proposition \ref{prop:1step} is similar to the $1$st-order step integer preserving relation from \cite{Bareiss1968} with a much easier proof. The condition $d_{i,j-2}\neq 0$ is not required as explained in \cite{Bareiss1968}, or as it follows directly from Equation (\ref{machinNSEW}), but it matters for our dynamic programming method since we cannot determine $d_{i,j}$ if $d_{i,j-2}=0$ using the table information from the $(j-1)$th and $(j-2)$th rows.

In order to accelerate the computation of determinants within a dynamical programming approach, we must ensure that $d_{i,j_{0}-1}\neq 0$ from Equation (\ref{jthstepid}). For that we have the next theorem.
\begin{theorem}\label{thm:noese}
For all $(i,j)$ such that $i-j+1\geq 0$, and $j_0\leq j\leq\big\lceil n/2 \big\rceil$, if
\begin{displaymath}
d_{i,j_{0}-1}\neq 0,\quad d_{i,k}=0\text{ for $j_{0}\leq k\leq j-1$},
\end{displaymath}
then
\begin{align*}
d_{i,j}&=d_{i,j_{0}-1}^{j_{0}-j}\det\left(
\begin{array}{lcl}
d_{i,j_{0}} & \ldots & d_{i+j-j_0,j_{0}}\\
\vdots &\ddots & \vdots\\
d_{i-(j-j_0),j_{0}} & \ldots & d_{i,j_{0}}
\end{array}\right).
\end{align*}
\end{theorem}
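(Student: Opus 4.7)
Theorem \ref{thm:noese} is obtained from the Bareiss $j$th-step integer-preserving identity (\ref{jthstepid}) by a single algebraic rearrangement. The plan is to invoke (\ref{jthstepid}) directly for the given $i$, $j$, and $j_{0}$; this yields
\begin{displaymath}
d_{i,j}\, d_{i,j_{0}-1}^{\,j-j_{0}} \;=\; \det\left(\begin{array}{lcl}
d_{i,j_{0}} & \ldots & d_{i+j-j_0,j_{0}}\\
\vdots &\ddots & \vdots\\
d_{i-(j-j_0),j_{0}} & \ldots & d_{i,j_{0}}
\end{array}\right).
\end{displaymath}
Since $d_{i,j_{0}-1}\neq 0$ by hypothesis, the scalar $d_{i,j_{0}-1}^{\,j-j_{0}}\in\mathbb{F}_{q}$ is a unit, so I would divide both sides by it and then rewrite $d_{i,j_{0}-1}^{\,-(j-j_{0})}$ as $d_{i,j_{0}-1}^{\,j_{0}-j}$ to obtain the stated formula.

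Before invoking (\ref{jthstepid}) I would check the index bookkeeping so that every $d_{i',j_{0}}$ appearing in the matrix on the right is well-defined. The matrix has size $(j-j_{0}+1)\times(j-j_{0}+1)$ with first-coordinate range $i-(j-j_{0})\leq i' \leq i+(j-j_{0})$. The hypothesis $i-j+1\geq 0$ ensures the leftmost column index is admissible, and $j\leq \lceil n/2\rceil$ together with the implicit constraint $j-1 \leq i < n-j+1$ from the definition of $\mathbf{X}_{i,j}$ in Section \ref{sect:intro} handles the rightmost. This is a routine verification: the same geometric conditions that make $\mathbf{X}_{i,j}$ itself fit inside the triangular table ensure that each sub-matrix $\mathbf{X}_{i',j_{0}}$ also fits.

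The final point worth highlighting is that the auxiliary hypothesis $d_{i,k}=0$ for $j_{0}\leq k \leq j-1$ plays \emph{no} role in the algebraic derivation; identity (\ref{jthstepid}) holds unconditionally, as remarked right after its statement. This hypothesis rather specifies the algorithmic \emph{context} in which Theorem \ref{thm:noese} is useful: when the entries above position $(i,j)$ vanish from level $j_{0}$ down to level $j-1$, the theorem expresses $d_{i,j}$ in terms of the $(j_{0}-1)$th and $j_{0}$th rows alone, which is precisely what the dynamic program of Section \ref{sect:algorithm} needs in order to skip the entire block of zeros in one step (compatibly with the square-of-zeros geometry established in Theorem \ref{thm:carreauxdezeros}). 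The main obstacle, if any, is therefore not the algebra (a one-line division) but the careful matching of index ranges between (\ref{jthstepid}) and the current statement.
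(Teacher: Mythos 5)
Your reduction rests entirely on the claim that Equation (\ref{jthstepid}), read with the \emph{contiguous} Hankel determinants $d_{i+c-r,j_0}$ as entries, is an unconditional identity, so that the hypothesis $d_{i,k}=0$ for $j_0\leq k\leq j-1$ can be discarded and the theorem follows by one division. That claim is false as soon as $j-j_0\geq 2$, and the discarded hypothesis is precisely what makes the formula true. Concretely, take ${\bf x}=(1,1,1,2,1,1,1)$, $i=3$, $j_0=2$, $j=4$: then $d_{3,1}=2$, $(d_{1,2},d_{2,2},d_{3,2},d_{4,2},d_{5,2})=(0,-1,3,-1,0)$, $d_{3,4}=5$, and
\begin{displaymath}
d_{3,4}\,d_{3,1}^{2}=20\neq 21=\det\left(\begin{array}{ccc} 3 & -1 & 0\\ -1 & 3 & -1\\ 0 & -1 & 3\end{array}\right),
\end{displaymath}
and since the two sides reduce to $20$ and $21$ modulo any prime, this is a counterexample over every $\mathbb{F}_q$. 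The underlying reason is that Sylvester's/Bareiss's identity applied to $\mathbf{X}_{i,j}$ with pivot block $\mathbf{X}_{i,j_0-1}$ has as entries the \emph{bordered} minors $a^{(j_0-1)}_{r,c}$, which are determinants of non-contiguous submatrices and do not coincide with the contiguous $d_{i',j_0}$'s in general; the remark after (\ref{jthstepid}) only asserts that the identity survives $d_{i,j_0-1}=0$, not that the contiguous-entry form holds for an arbitrary sequence (its stated side conditions involving $i_0,i_1$ are inherited from the zero-run setting of Theorem \ref{thm:carreauxdezeros}).

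This is exactly where the hypothesis you set aside does the work, and it is why the paper's proof is not a one-line division: it proceeds by induction on the depth $j-j_0$ of the run of zeros below $(i,j_0-1)$, using Proposition \ref{prop:1step} at each step to show that the vertical zeros force zeros to spread horizontally along row $j_0$ around column $i$ (first $d_{i\pm1,j_0}=0$, then longer runs), so that the matrix on the right-hand side acquires the banded zero pattern under which its determinant collapses to $d_{i,j}\,d_{i,j_0-1}^{j-j_0}$; only then is the division by $d_{i,j_0-1}^{j-j_0}\neq 0$ legitimate. Your index bookkeeping is fine, but to repair the argument you would have to prove the contiguous-entry identity \emph{under} the stated zero hypotheses (which is the actual content of the theorem), not quote it as an unconditional fact.
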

\begin{proof}
Before starting, for a fixed position $i$ and for any size $j_0$, we observe that the value $j-j_{0}$ expresses the depth of singularity, that is, the number of zeros below the non-zero cell indexed by $(i,j_0-1)$. The depth of singularity also relates to the concentration of zeros aligned horizontally around the cell $(i,j_0-1)$. By concentration of zeros, we mean the length of a run of consecutive zeros.

\textbf{$(j=j_{0}+1)$th step:} Suppose that $d_{i,j_{0}-1}\neq 0$ and $d_{i,j_{0}}=0$. If $d_{i,j_{0}+1}=0$, then there is at least one zero to the left or to the right of $(i,j_{0})$ or both. Indeed Proposition \ref{prop:1step} entails that $d_{i,j_{0}+1}d_{i,j_{0}-1}=d_{i,j_{0}}^{2}-d_{i+1,j_{0}}d_{i-1,j_{0}}$ which, in this case, is equivalent to $d_{i,j_{0}+1}d_{i,j_{0}-1}=-d_{i+1,j_{0}}d_{i-1,j_{0}}$ from which we infer that either $d_{i-1,j_{0}}=0$ or $d_{i+1,j_{0}}=0$ whenever $d_{i,j_{0}+1}=0$. So there is qualitatively speaking a small concentration of zeros aligned horizontally around the cell $(i,j_0)$.

\textbf{$(j=j_{0}+2)$th step:} Now suppose that $d_{i,j_{0}-1}\neq 0$, $d_{i,j_{0}}=d_{i,j_{0}+1}=0$ and write
\begin{displaymath}
0=d_{i,j_{0}+1}d_{i,j_{0}-1}^{2}=\det\left(\begin{array}{ccc}
0 & d_{i+1,j_{0}} & d_{i+2,j_{0}} \\
d_{i-1,j_{0}} & 0 & d_{i+1,j_{0}}\\
d_{i-2,j_{0}} & d_{i-1,j_{0}} & 0
\end{array}\right).
\end{displaymath}
By the $(j=j_{0}+1)$th-step, if $d_{i+1,j_{0}}=0$, then $0=d_{i,j_{0}+1}d_{i,j_{0}-1}^{2}=d_{i-1,j_{0}}^{2}d_{i+2,j_{0}}$ from which either $d_{i-1,j_{0}}=0$ or $d_{i+2,j_{0}}=0$; if $d_{i-1,j_{0}}=0$, then $0=d_{i,j_{0}+1}d_{i,j_{0}-1}^{2}$ $=d_{i+1,j_{0}}^{2}d_{i-2,j_{0}}$ from which either $d_{i+1,j_{0}}=0$ or $d_{i-2,j_{0}}=0$. Therefore we conclude that $d_{i-1,j_{0}}=d_{i+1,j_{0}}=0$ as well. The horizontal part of the cross contains therefore a higher concentration of zeros around $d_{i,j_{0}}$ with respect to the previous step. We cannot conclude at this moment that $d_{i+1,j_{0}}=0=d_{i+2,j_{0}}$ or $d_{i-1,j_{0}}=0=d_{i-2,j_{0}}$ without further adding deeper singularities.

\textbf{$(j=j_{0}+3)$th step:} Now suppose that $d_{i,j_{0}-1}\neq 0$ and $d_{i,j_{0}}=d_{i,j_{0}+1}=d_{i,j_{0}+2}=0$ and write
\begin{align}
0&=d_{i,j_{0}+2}d_{i,j_{0}-1}^{3}=\det\left(\begin{array}{cccc}
 0 & d_{i+1,j_{0}} & d_{i+2,j_{0}} & d_{i+3,j_{0}} \\
d_{i-1,j_{0}} & 0 & d_{i+1,j_{0}} & d_{i+2,j_{0}}\\
d_{i-2,j_{0}} & d_{i-1,j_{0}} & 0 & d_{i+1,j_{0}}\\
d_{i-3,j_{0}} & d_{i-2,j_{0}} & d_{i-1,j_{0}} & 0
\end{array}\right).\label{eq:flatus}
\end{align}
From the $(j=j_{0}+2)$th step, $d_{i-1,j_{0}}=0=d_{i+1,j_{0}}$, and Equation (\ref{eq:flatus}) is equivalent to
\begin{align*}
0&=d_{i,j_{0}+2}d_{i,j_{0}-1}^{3} % \\ &
=\det\left(\begin{array}{cccc}
 0 & 0 & d_{i+2,j_{0}} & d_{i+3,j_{0}} \\
 0 & 0 & 0 & d_{i+2,j_{0}}\\
d_{i-2,j_{0}} & 0 & 0 & 0\\
d_{i-3,j_{0}} & d_{i-2,j_{0}} & 0 & 0
\end{array}\right) \\
&=d_{i+2,j_{0}}^{2}d_{i-2,j_{0}}^{2},
\end{align*}
so that either $d_{i-2,j_{0}}=0$ or $d_{i+2,j_{0}}=0$. With the knowledge of the $(j_{0}+2)$th-step, we can safely conclude that either $d_{i+1,j_{0}}=0=d_{i+2,j_{0}}$ or $d_{i-2,j_{0}}=0=d_{i-1,j_{0}}$. Hence the concentration of zeros increases on the $j_{0}$th row with respect to the previous steps. We observe that the previous determinant has at least one row with $3$ consecutive zeros. The position of a run of zeros from one row to the following is shifted cyclically by one position.

% \textbf{$(j=j_{0}+4)$th step:} From step $(j=j_{0}+3)$, suppose
% at least that $d_{i+1,j_{0}}=0=d_{i+2,j_{0}}$. Maybe
% $d_{i-2,j_{0}}\neq 0$ at this stage. Suppose also that we have
% a deeper singularity that is $d_{i,j_{0}+3}=0$. Remember from
% step $(j=j_{0}+2)$ that $d_{i\pm 1,j_{0}}=0$ as well. Then
% combining, we obtain that
% \begin{displaymath}
% 0=d_{i,j_{0}-1}^{4}d_{i,j_{0}+3}=d_{i+3,j_{0}}d_{i+4,j_{0}}
% d_{i-2,j_{0}}^{2},
% \end{displaymath}
% from which either $d_{i-2,j_{0}}=0$ or $d_{i+3,j_{0}}=0$. The
% concentration of zeros increases on the $j_{0}$th row. If we
% write matrices resulting from this step for different cases, we
% would observe that each contains at least one row with $4$ zeros.
% The position of a run of zeros from one row to the following is
% shift cyclically by one position.

The process stops when we can no longer add deeper singularity,
that is, we stop for the smallest index $j>j_{0}$ such that
$d_{i,j}\neq 0$. When such index $j$ is found, then we can no
longer deduce zero determinants on the horizontal part of the
cross.

\textbf{$j$th step:} Assume that $d_{i,j_{0}-1}\neq 0$ and $d_{i,k}=0$ for $j_{0}\leq k\leq j-1$, and now assume $d_{i,j}\neq 0$. The matrix to consider at this step has size $(j+1)\times(j+1)$. Thus at this current $j$th-step, we can find $d_{i,j}d_{i,j_{0}-1}^{j-j_{0}} \neq 0$. We observe that we hit the boundaries of a square of zeros. At the following $(j+1)$th step, all rows would contain at least $2$ non-zero elements or equivalently there would not be a row with at least $j+1$ consecutive zeros. Thus $d_{i,j+1}d_{i,j_{0}-1}^{j+1-j_{0}}$ would be the sum of at least two products and it would become impossible to correctly deduce the values of the determinants.
\end{proof}

Now we work for our next result, a partially proved conjecture. Given valid indices $i,j$ for the column and the row of the triangular table of determinants, and $2k\leq j$ let $\mathbf{G}_{i,j,k}$ be a matrix of size $(k+1)\times (k+1)$ defined as
\begin{align}
\mathbf{G}_{i,j,k}&=\left(\begin{array}{cccc}
d_{i,j-2k} & d_{i+1,j-2k+1} & \ldots & d_{i+k,j-k}\\
d_{i-1,j-2k+1} & d_{i,j-2k+2} & \ldots & d_{i+k-1,j-k+1}\\
\vdots &\vdots &\ddots & \vdots\\
d_{i-k,j-k} & d_{i-k+1,j-k+1} & \ldots & d_{i,j}
\end{array}\right).
\end{align}
In other words, for $0\leq r,c \leq k$, the entry of $\mathbf{G}_{i,j,k}$ located on the $r$th row and $c$th column is given by $d_{i-r+c,j-2k+r+c}$. The pair $(i-r+c,j-2k+r+c)$ indexing an element of $\mathbf{G}_{i,j,k}$ is the intersection of two perpendicular lines. The intersection of a group of $k$ parallel lines intersecting perpendicularly another group of $k$ parallel lines as it might be easier to see with the following representation by drawing $k$ lines with slope $\pi/4$ and separated at distance $\sqrt{2}$ intersecting $k$ other lines with slope $3\pi/4$ also at distance $\sqrt{2}$ of each other:

\tiny
\[\arraycolsep=0.2pt\def\arraystretch{2.0}
\begin{array}{c|ccccccccccccc}
         & \ldots & i-k     & i-k+1     & \ldots & i-1     & i   & i+1    & \ldots  & i+k-1     & i+k     & \ldots\\ \hline
0      & \ldots & 1       & 1         & \ldots & 1       & 1   & 1      & \ldots  & 1         & 1       & \ldots\\
1      & \ldots & x_{i-k} & x_{i-k+1} & \ldots & x_{i-1} & x_{i}&x_{i+1}& \ldots  & x_{i+k-1} & x_{i+k} & \ldots\\
\vdots &\reflectbox{$\ddots$}& \vdots     & \vdots &  \vdots     & \vdots & \vdots    & \vdots& \vdots   & \vdots  &\vdots&\ddots\\
\cline{7-7} j-2k   &       &            &             &        &           &\multicolumn{1}{|c|}{d_{i,j-2k}}&     &         &             &           &       \\
\cline{6-8}j-2k+1 &       &            &             &        &\multicolumn{1}{|c|}{d_{i-1,j-2k+1}}&  &\multicolumn{1}{|c|}{d_{i+1,j-2k+1}}&   &             &           & \\
\cline{6-6}\cline{8-8}
\vdots &       &            &             &\reflectbox{$\ddots$}&    &         &   &\ddots         &             &           & \\[-0.5mm]
\cline{4-4}\cline{10-10}j-k-1  &       &            &\multicolumn{1}{|c|}{d_{i-k+1,j-k-1}}&    &           &      &          &         &\multicolumn{1}{|c|}{d_{i-k+1,j-k-1}}&       & \\
\cline{4-4}\cline{10-10}
\cline{3-3}\cline{11-11}j-k    & \ldots&\multicolumn{1}{|c|}{d_{i-k,j-k}} &           &        &           & d_{i,j-k}      &          &         &               &\multicolumn{1}{|c|}{d_{i+k,j-k}}    &\ldots\\
\cline{3-3}\cline{11-11}\cline{4-4}\cline{10-10} j-k+1  &       &            &\multicolumn{1}{|c|}{d_{i-k+1,j-k+1}}&    &           &       &          &         &\multicolumn{1}{|c|}{d_{i-k+1,j-k+1}}&       & \\
\cline{4-4}\cline{10-10}\vdots &       &            &             &\ddots&           &       &          &\reflectbox{$\ddots$}         &             &           &\\[-0.5mm]
\cline{6-6}\cline{8-8}j-1    &       &            &             &        &\multicolumn{1}{|c|}{d_{i-1,j-1}}&  &\multicolumn{1}{|c|}{d_{i+1,j-1}}&   &             &           & \\
\cline{6-8}j      &       &            &             &        &           &\multicolumn{1}{|c|}{d_{i,j}}&     &         &             &           &       \\
\cline{7-7}
\vdots       &       &            &             &        &   & \vdots    &  &&&&\\
\end{array}
\]
\normalsize

If the information about the determinants $d_{i,j'}$, $0\leq j' \leq j-1$, is known, then we might hope to solve a determinantal equation like $\det(\mathbf{G}_{i,j,k})=g$ for some $g\in\mathbb{F}_{q}$ for the unknown $d_{i,j}$ located in the bottom right corner of $\mathbf{G}_{i,j,k}$.

We may sometimes abuse the language to denote the index $(i,j)$ or the value indexed by $(i,j)$ which is $d_{i,j}$. It is very convenient to refer to $k$ as a radius of an $\ell_{1}$-ball centered around $d_{i,j-k}$ or more precisely around the index $(i,j-k)$. An $\ell_{1}$-ball is a square grid. The grid can be seen as the intersection of the two families of parallel lines and each family perpendicular to each other as mentioned previously. The indices obtained by the intersection of the two families are used to define $\mathbf{G}_{i,j,k}$. We refer the neighbourhood around $(i,j-k)$, which is the center of the grid, as the $\ell_{1}$-ball of radius $k$. If $k$ is even, then the center $(i,j-k)$ is deleted. If $k$ is odd, then the center is part of the ball.
% The positional indices are invariant along diagonals.
% The matrix sizes are invariant along anti-diagonals.
If $\det(\mathbf{G}_{i,j,k})=0$, then there is a local linear dependency around $(i,j-k)$. From Dodgson's identity, $\mathbf{G}_{i,j,k-2}$ plays the role of the center which can be seen as the interior of the neighbourhood of $(i,j-k)$.

We postulate the following conjecture about the local linear dependency or more precisely about the minimum amount of information required to determine $d_{i,j}$ assuming the table is known up to the $(j-1)$th level, inclusively.
\begin{conjecture}\label{conj:deteq}
For $2\leq k\leq 6$, $j\geq 2k$, and $j-1\leq i\leq n-j$, where $n$ is the length of the sequence, the smallest radius $k$ for which $\det{\mathbf{G}}_{i,j,k}=0$ is the smallest value $k$ for which $\det{\mathbf{G}}_{i,j,k-1}\neq 0$. Assuming $\det{\mathbf{G}}_{i,j,k'}=0$ for $2\leq k'\leq 6$, then % for $k=7$,
we have $\det{\mathbf{G}}_{i,j,7}=0$ if and only $\det{\mathbf{G}}_{i,j,6}\neq 0$ and $d_{i,j-7}=0$.
\end{conjecture}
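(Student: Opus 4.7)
The natural engine for attacking Conjecture \ref{conj:deteq} is Dodgson's identity (\ref{machinNSEW}) applied to the matrix $\mathbf{G}_{i,j,k}$ itself, coupled with the observation that the four $k\times k$ corner minors and the $(k-1)\times(k-1)$ central minor of $\mathbf{G}_{i,j,k}$ are themselves members of the same family. A direct check on the indexing $d_{i-r+c,j-2k+r+c}$ shows that the center is $\mathbf{G}_{i,j-2,k-2}$, the north and south corner minors are $\mathbf{G}_{i,j-2,k-1}$ and $\mathbf{G}_{i,j,k-1}$, and the east and west corner minors are $\mathbf{G}_{i+1,j-1,k-1}$ and $\mathbf{G}_{i-1,j-1,k-1}$. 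Dodgson's identity therefore specializes to the internal recursion
\begin{displaymath}
\det\mathbf{G}_{i,j,k}\,\det\mathbf{G}_{i,j-2,k-2}=\det\mathbf{G}_{i,j-2,k-1}\,\det\mathbf{G}_{i,j,k-1}-\det\mathbf{G}_{i+1,j-1,k-1}\,\det\mathbf{G}_{i-1,j-1,k-1},
\end{displaymath}
which would be the only tool really needed for the argument.

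For the first assertion, fix $i,j$ and let $k^{\ast}\in\{2,\ldots,6\}$ be the smallest radius such that $\det\mathbf{G}_{i,j,k^{\ast}}=0$. I would argue that this equivalently identifies the smallest $k$ with $\det\mathbf{G}_{i,j,k-1}\neq 0$ by a downward induction from $k^{\ast}$: apply the recursion at $k=k^{\ast}$, deduce that the right-hand side can vanish only if the lateral factors $\det\mathbf{G}_{i\pm 1,j-1,k^{\ast}-1}$ vanish (since by minimality of $k^{\ast}$ the southern factor $\det\mathbf{G}_{i,j,k^{\ast}-1}$ is nonzero), and then invoke Theorem \ref{thm:carreauxdezeros} to transport the lateral vanishing into a square of zeros in the $\mathbf{G}$-family that forces $\det\mathbf{G}_{i,j,k-1}=0$ for every $k<k^{\ast}$. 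The cross-shape identity of Theorem \ref{thm:noese} would then be used to close the induction by ruling out that an intermediate translate $\det\mathbf{G}_{i,j-2,k-1}$ accidentally revives a nonzero value before radius $k^{\ast}$.

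For the second assertion, I would specialize the recursion to $k=7$ under the hypothesis $\det\mathbf{G}_{i,j,k'}=0$ for $2\leq k'\leq 6$. Theorem \ref{thm:carreauxdezeros} applied to the nested family at radii $k'\leq 6$ produces a neighbourhood of $(i,j-7)$ in the quotient-difference table that is entirely zero except possibly at the virtual center $d_{i,j-7}$ itself, and the recursion at $k=7$ then collapses to
\begin{displaymath}
\det\mathbf{G}_{i,j,7}\,\det\mathbf{G}_{i,j-2,5}=\det\mathbf{G}_{i,j-2,6}\,\det\mathbf{G}_{i,j,6}-\det\mathbf{G}_{i+1,j-1,6}\,\det\mathbf{G}_{i-1,j-1,6},
\end{displaymath}
in which every surviving determinant on the right can be rewritten, via Theorem \ref{thm:noese}, as a monomial in $d_{i,j-7}$ times a known nonzero prefactor. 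Reading the two directions of the iff then reduces to checking that the nonvanishing of $\det\mathbf{G}_{i,j,6}$ is exactly what rules out the competing cancellation on the right, and that $d_{i,j-7}=0$ is exactly what sends the remaining product to zero.

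The hard part will be the bookkeeping in the first induction: the Dodgson recursion only yields a single scalar identity per application, whereas the conjecture asserts a structural pattern across the entire window $k=2,\ldots,6$, and iterating the recursion introduces cancellations that I do not expect to control uniformly without extra input. This is almost certainly why the statement remains a conjecture and why the explicit numerical cutoff $k\leq 6$ appears: beyond radius $6$ the number of lateral propagation terms in the recursion grows and the overlapping squares of zeros from Theorem \ref{thm:carreauxdezeros} begin to interact nontrivially. Pushing past $k=6$ would likely require either a Bareiss-style identity expressing $\det\mathbf{G}_{i,j,k}$ directly as a polynomial in the $d_{i',j'}$ (bypassing the intermediate $\mathbf{G}$'s), or a strengthening of Theorem \ref{thm:noese} that records the precise depth of singularity rather than just its existence.
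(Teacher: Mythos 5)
First, a point of comparison you could not have known: the paper does not prove this statement at all. It is stated and used as a conjecture, supported only by numerical verification against the trivial algorithm, and proving (and extending) it is explicitly listed as future work. So there is no paper proof to match; your proposal has to stand on its own, and as it stands it is a strategy sketch with genuine gaps rather than a proof. What is correct and genuinely useful in it is the structural observation: the indexing $d_{i-r+c,\,j-2k+r+c}$ does make the four corner $k\times k$ minors and the central $(k-1)\times(k-1)$ minor of $\mathbf{G}_{i,j,k}$ equal to $\mathbf{G}_{i,j-2,k-1}$, $\mathbf{G}_{i,j,k-1}$, $\mathbf{G}_{i\pm 1,j-1,k-1}$ and $\mathbf{G}_{i,j-2,k-2}$, so Dodgson's identity (\ref{machinNSEW}) does yield the internal recursion you write, and that identity is a plausible tool toward Conjecture \ref{conj:deteq} and the paper's question about radii $k>7$.

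The gaps are in the inferences you draw from that recursion. In the first part, from $\det\mathbf{G}_{i,j,k^{\ast}}=0$ the recursion only gives $\det\mathbf{G}_{i,j-2,k^{\ast}-1}\det\mathbf{G}_{i,j,k^{\ast}-1}=\det\mathbf{G}_{i+1,j-1,k^{\ast}-1}\det\mathbf{G}_{i-1,j-1,k^{\ast}-1}$; with the southern factor nonzero this equality can perfectly well hold with all four factors nonzero, so the claim that the lateral factors must vanish is a non sequitur, and the whole downward induction collapses at its first step. Moreover, Theorem \ref{thm:carreauxdezeros} and Theorem \ref{thm:noese} are statements about zeros of the Hankel determinants $d_{i,j}$ in the quotient-difference table; the matrices $\mathbf{G}_{i,j,k}$ are not Hankel (their entries depend on both $r-c$ and $r+c$), and no square-of-zeros or cross-shape result has been established for the quantities $\det\mathbf{G}_{i,j,k}$, so "transporting the lateral vanishing into a square of zeros in the $\mathbf{G}$-family" is not licensed by anything proved in the paper. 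The same conflation undermines the second part: the hypothesis $\det\mathbf{G}_{i,j,k'}=0$ for $2\leq k'\leq 6$ does not, by Theorem \ref{thm:carreauxdezeros}, produce a zero neighbourhood of $(i,j-7)$ in the $d$-table, and Theorem \ref{thm:noese} does not rewrite $\det\mathbf{G}_{i\pm 1,j-1,6}$ or $\det\mathbf{G}_{i,j-2,6}$ as a monomial in $d_{i,j-7}$ times a nonzero prefactor; that step is asserted, not derived. To make progress along your line one would need an identity expressing $\det\mathbf{G}_{i,j,k}$ directly in terms of the $d_{i',j'}$ (a Bareiss/Sylvester-type condensation for this non-Hankel family), exactly the missing input you yourself identify at the end; absent that, the statement remains, as in the paper, a conjecture.
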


We recall from linear algebra that the condition $\det(\mathbf{G}_{i,j,k-1})\neq 0$ is necessary and sufficient for the uniqueness to the solution of the linear equation $\det(\mathbf{G}_{i,j,k}) = 0$ with $d_{i,j}$ as unknown for all value of $k$ and $j\geq 2k$.

% \begin{remark}
We verified the previous conjecture by comparing $d_{i,j}$ obtained from solving the corresponding determinantal equation with the value obtained from the trivial algorithm.
% \end{remark}
Given that we never found any counter-example to Conjecture \ref{conj:deteq}, we decided to include its algorithmic flavour, that is Algorithm \ref{alg:conjecture}, in our dynamic method given in Algorithm \ref{alg:dynamicprog}.

\begin{breakablealgorithm}\label{alg:conjecture}
\caption{Growing an $\ell_{1}$-metric ball and solving for the unknown}
\begin{algorithmic}[1]
\raggedright
\Require Integer $n>0$, $2\leq k\leq 7$, $j\geq 2k$, and $j-1\leq i\leq n-j$
\If{$k\leq 6$}
\For{$k'=2$ to $k$}
\If{$\det(\mathbf{G}_{i,j,k'-1})\neq 0$}
\State{Solve $\det(\mathbf{G}_{i,j,k'})=0$ with $d_{i,j}$ as unknown}
\State{\textbf{Return} $d_{i,j}$}
\EndIf
\EndFor
\ElsIf{$k=7$, $\det(\mathbf{G}_{i,j,6})\neq 0$, $d_{i,j-7}=0$}
\State{Solve $\det(\mathbf{G}_{i,j,7})=0$ with $d_{i,j}$ as unknown}
\State{\textbf{Return} $d_{i,j}$}
\Else
\State{$k$ out of range}\Comment{We need more research for larger radius $k$.}
\EndIf
\end{algorithmic}
\end{breakablealgorithm}

% We would like to point out a few axes of further research. How
% would the linear dependency vanish as the radius gets larger or
% synonymously how far does it propagate around the center? Why
% do we need to add an extra condition for $k=7$? Can we further
% enlarge the radius by adding new conditions for $k>7$? Would we
% need to consider the interior $\mathbf{G}_{i,j,k-2}$ of the
% neighbourhood $\mathbf{G}_{i,j,k}$?

%%%%%%%%%%%%%%%%%%%%%%%%%%%%%%%%%%%%%%%%%%%

We finish by briefly explaining Equation (\ref{jthstepid}) as we promised at the beginning of this section. Given a square matrix $A$ of size $h\times h$, with $h=\lceil n/2\rceil$, we introduce the notation from \cite{Bareiss1968}
\begin{displaymath}
a_{r,c}^{(k)}=\det\left(
\begin{array}{ccccc}
a_{0,0} & a_{0,1} & \ldots & a_{0,k-1} & a_{0,c}\\
a_{1,0} & a_{1,1} & \ldots & a_{1,k-1} & a_{1,c}\\
\vdots  & \vdots  & \ddots & \vdots    & \vdots\\
a_{k-1,0} & a_{k-1,1} & \ldots & a_{k-1,k-1} & a_{k-1,c}\\
a_{r,0} & a_{r,1} & \ldots & a_{r,k-1} & a_{r,c}
\end{array}
\right)\quad\text{for $k\leq r,c\leq h$.}
\end{displaymath}
Clearly $a_{r,c}^{(k)}$ is the determinant of a $(k+1)\times(k+1)$ matrix. We observe that the principal minors of $A$ are $a_{k,k}^{(k)}$. In \cite{Bareiss1968}, it is shown that
\begin{align}
a_{r,c}^{(k)}&=\frac{1}{\big(a_{\ell,\ell}^{(\ell-1)}\big)^{k-\ell}}
\det\left(\begin{array}{llll}
a_{\ell,\ell}^{(\ell)} & \ldots & a_{\ell,k-1}^{(\ell)} & a_{\ell,c}^{(\ell)} \\
\vdots & \ddots & \vdots & \vdots\\
a_{k-1,\ell}^{(\ell)} & \ldots & a_{k-1,k-1}^{(\ell)} & a_{k-1,c}^{(\ell)} \\
a_{r,\ell}^{(\ell)} & \ldots & a_{r,k-1}^{(\ell)} & a_{r,c}^{(\ell)}
\end{array}
\right)\text{for $0<\ell< k$.}\label{gugussemachin}
\end{align}
So $a_{r,c}^{(k)}$ is also the determinant of a $(k-\ell+1)\times (k-\ell+1)$ matrix of determinants. The left side of Equation (\ref{gugussemachin}) does not depend on $\ell$. Let us concentrate on the principal minors when $r=c=k$, and substitute $\ell=j_0$ and $k=j$ in Equation (\ref{gugussemachin}) to get that
\begin{align*}
a_{j,j}^{(j)}&=\frac{1}{\big(a_{j_0,j_0}^{(j_0-1)}\big)^{j-j_0}}
\det\left(\begin{array}{llll}
a_{j_0,j_0}^{(j_0)} & \ldots & a_{j_0,j-1}^{(j_0)} & a_{j_0,j}^{(j_0)} \\
\vdots & \ddots & \vdots & \vdots\\
a_{j-1,j_0}^{(j_0)} & \ldots & a_{j-1,j-1}^{(j_0)} & a_{j-1,j}^{(j_0)} \\
a_{j,j_0}^{(j_0)} & \ldots & a_{j,j-1}^{(j_0)} & a_{j,j}^{(j_0)}
\end{array}
\right)\quad\text{for $0<j_0< j$.}
\end{align*}
Entries (determinants) inside of the previous matrix are those $d_{j-j_0,j_0}$'s introduced at the beginning of this section. We observe that we can shift $j-j_0$ by any quantity modulo $n$ and therefore get Equation (\ref{jthstepid}).

\section{Algorithm to compute determinants of Hankel matrices over finite fields}\label{sect:algorithm}

In the following algorithm, the symbol $j$ indexes the rows of the table and $j$ is the size of the Hankel matrix $\mathbf{X}_{i,j}$ as in the introduction. The symbol $i$ indexes the columns and is related to the position in the input vector ${\bf x}$ from where we build the Hankel matrix $\mathbf{X}_{i,j}$. We use the symbol $M$ to denote the dynamic table under consideration, and $M[j][i]$ stands for $d_{i,j}=\det{\mathbf{X}_{i,j}}$. The table grows from top to bottom by considering the smallest possible Hankel matrices to the largest one if $n$ is odd or the largest two if $n$ is even. For a given matrix of size $j$ or equivalently a given row $j$ of $M$, the algorithm sweeps from the left to the right using the input vector ${\bf x}$ in order to consider all possible Hankel matrices of size $j$. We use also $M[j][\cdot]$ to refer to the $j$th row of the table $M$.

\begin{breakablealgorithm}\label{alg:dynamicprog}
\caption{Computing determinants for all possible Hankel matrices made up from a sequence ${\bf x}\in\mathbb{F}_{q}^{n}$}
\begin{algorithmic}[1]
\raggedright
\Require Integer $n>0$ and vector ${\bf x}\in\mathbb{F}_{q}^{n}$.
\Ensure Triangular table $M$.
\State{$h\leftarrow\lceil{n}/{2}\rceil$}
\State{$M\leftarrow \emptyset$\Comment{Allocate space for $M$ with base $h$ and width $n$.}\label{algL1}}
\For{$i=0$ to $n-1$}\Comment{Initialize first two rows $M$.}
\State{$M[0][i]\leftarrow 1$}
\State{$M[1][i]\leftarrow x_{i}$}
\EndFor
\For{$j=2$ to $h$}\label{algloopj}
\State{Find new squares of zeros$\quad$ \Comment{Use Theorem \ref{thm:carreauxdezeros} with the knowledge\\ \hfill of rows $M[j-1][\cdot]$ and $M[j-2][\cdot]$.}}
\For{$i=j-1$ to $n-j$\label{algSweepPhaseStart}} \Comment{Loop is parallelized.}
\If{$M[j][i]$ has not been yet evaluated\label{algL2}}
\If{$M[j-2][i]\neq 0$}
\State{Compute $M[j][i]$ using Proposition \ref{prop:1step}}
\ElsIf{Conditions for Conjecture \ref{conj:deteq}}
\State{Compute $M[j][i]$ accordingly with Algorithm \ref{alg:conjecture}}
\ElsIf{Conditions for Theorem \ref{thm:noese}}
\State{Compute $M[j][i]$ accordingly}
\Else
\State{Compute $M[j][i]$ explicitly from its definition}
\EndIf
\EndIf
\EndFor
\EndFor
\end{algorithmic}
\end{breakablealgorithm}

Based on the results from Section \ref{sect:results}, the algorithm correctly terminates. We note that an auxiliary table can be maintained to flag entries of $M$ that were computed or not. Given $j$ from line (\ref{algloopj}), to find squares of zeros using $M[j-1][\cdot]$ and $M[j-2][\cdot]$, we look for consecutive non zero elements between two indices, say $i_0$ and $i_1$ (including $M[j-2][i_0]\neq 0$ and $M[j-2][i_1]\neq 0$) at level $j-2$, then check for $M[j-1][i_0]\neq 0$ followed by zero elements until $M[j-1][i_1]\neq 0$; the procedure begins with $i_0=j-1$ and if $i_1$ is found to be the right upper corner, then a square is filled, and the procedure continues from $i_1$ until reaching $n-j$. Once the squares are filled, the remaining elements on a given row must be evaluated. The goal is to use as few as possible knowledge from the previous rows by using Proposition \ref{prop:1step}, Theorem \ref{thm:noese}, and Conjecture \ref{conj:deteq}. If none of the previous results applied, then we revert to the trivial and expensive evaluation.

We end this section by explaining briefly how to find the generating vector of a linear subsequence. Once an unusual long run of zeros is found on a row of the table, we stop the computations of determinants since actually there is no need to further complete the table. Indeed, all the knowledge we need to build the adjugate, in order to invert a Hankel matrix connecting the generating vector to a part of the original sequence, is located on the previous rows that had been already computed.

\section{Illustrative visual examples}\label{sect:illusexa}

In this section, we give two examples illustrating our new results over $\mathbb{F}_{2}$. For the first example, we generate a sequence of length $32$ indexed from left to right starting with index $0$, ending with index $31$, and which is given by
\begin{center}
\begin{BVerbatim}[commandchars=\\\{\}]
\textcolor{red}{01010110}\textcolor{green}{1001110100111010}\textcolor{red}{11101110}
\end{BVerbatim}
\end{center}
Red color represents the prefix and the postfix that are generated randomly. Green color represents the middle linear substring. The big square of zeros due to the linearity of the middle string is in blue color. The sequence is used to initialize the table so it is identical to the row indexed by $1$ below. Row $0$ contains only unit elements. The generating vector is $(1, 0, 1, 1)=(c_0,c_1,c_2,c_3)$. We note that as mentioned previously, $c_3=1$ to ensure the vector is not trivial. The leftmost index of linear subsequence is $8$, that is $i_0=8$, and so the generated random prefix is $01010110=x_0x_1\cdots x_7$. The rightmost index of the linear subsequence is $24$, and the generated random postfix is $11101110=x_{24}\cdots x_{31}$.

The middle linearly substring is given by
%\begin{displaymath}
$1001110100111010=x_{8}x_{9}\cdots x_{23}$,
%\end{displaymath}
and is generated linearly from the prefix string:
% \begin{displaymath}
$c_3x_8+c_2x_7+c_1x_6+c_0x_5=0$ % \Longleftrightarrow
implies $x_8 = c_2x_7+c_1x_6+c_0x_5 = x_7+x_5 = 0 + 1 = 1$,
% \end{displaymath}
% For the other linearly generated middle values, we have
% \begin{align*}
$x_9 = x_8+x_6=1+1=0,
x_{10} = x_9+x_7 = 0 + 0 =0$, and so on.
% x_{11} &= x_{10} + x_8 = 1\\
% x_{12} &= x_{11} + x_9 = 1\\
% x_{13} &= x_{12} + x_{10} = 1\\
% x_{14} &= x_{13} + x_{11} = 0\\
% x_{15} &= x_{14} + x_{12} = 1\\
% x_{16} &= x_{15} + x_{13} = 0\\
% x_{17} &= x_{16} + x_{14} = 0\\
% x_{18} &= x_{17} + x_{15} = 1\\
% x_{19} &= x_{18} + x_{16} = 1\\
% x_{20} &= x_{19} + x_{17} = 1\\
% x_{21} &= x_{20} + x_{18} = 0\\
% x_{22} &= x_{21} + x_{19} = 1\\
% x_{23} &= x_{22} + x_{20} = 0.
% \end{align*}

Since the generating vector has length $4$, then the row at which appears a long run of zeros is on the row indexed by $4$. The shape of $S$ is hexagonal, and the values of $j_1$ varies with those of the positional indices $i$. The value $j_0=4$.
\begin{center}
\scriptsize
\begin{BVerbatim}[commandchars=\\\{\}]
 0 :11111111111111111111111111111111
 1 :\textcolor{red}{01010110}\textcolor{green}{1001110100111010}\textcolor{red}{11101110}
 2 : 111111110010111001011111011101
 3 :  0010011111111111111101111011
 4 :   01001\textcolor{blue}{0000000000000}11101111
 5 :    1111\textcolor{blue}{0000000000000}1111100
 6 :     011\textcolor{blue}{0000000000000}100110
 7 :      11\textcolor{blue}{0000000000000}10011
 8 :       1\textcolor{blue}{0000000000000}1111
 9 :        \textcolor{blue}{0000000000000}100
10 :         \textcolor{blue}{000000000000}10
11 :          \textcolor{blue}{00000000000}1
12 :           \textcolor{blue}{0000000000}
13 :            \textcolor{blue}{00000000}
14 :             \textcolor{blue}{000000}
15 :              \textcolor{blue}{0000}
16 :               \textcolor{blue}{00}
\end{BVerbatim}
\end{center}
\normalsize

For the second example, we generate a sequence of length $81$ indexed from left to right starting with index $0$, ending with index $80$, and which is given by
\footnotesize
\begin{BVerbatim}[commandchars=\\\{\}]
\textcolor{red}{101100000010101111011010110101}\textcolor{green}{10010001111010110010}\textcolor{red}{0010101111011001100110000000100}
\end{BVerbatim}

\vspace{0.5cm}

\tiny
\begin{center}
\begin{BVerbatim}[commandchars=\\\{\}]
 0 :111111111111111111111111111111111111111111111111111111111111111111111111111111111
 1 :\textcolor{red}{101100000010101111011010110101}\textcolor{green}{10010001111010110010}\textcolor{red}{0010101111011001100110000000100}
 2 : 1110000001111100111111111111110010001001111110010001111100111100110011000000010
 3 :  01000000100110010001010010101111000100110101111000100110010011111111100000001
 4 :   100000010011111000111001111111111111111111111111110011111001000000010000000
 5 :    000000111101010001111111\textcolor{blue}{000000000000000000000}1101111010111100000001000000
 6 :     00000101111111111000001\textcolor{blue}{000000000000000000000}111111111111010000000100000
 7 :      1111111000001111000001\textcolor{blue}{000000000000000000000}10000000010111000000010000
 8 :       001011000001001000001\textcolor{blue}{000000000000000000000}1000000001111100000001111
 9 :        01111000001001000001\textcolor{blue}{000000000000000000000}100000000100010000000100
10 :         1001000001111000001\textcolor{blue}{000000000000000000000}10000000010001000000010
11 :          001000001111111111\textcolor{blue}{000000000000000000000}1000000001000111111111
12 :           11111111000000001\textcolor{blue}{000000000000000000000}100000000111111010010
13 :            0000011000000001\textcolor{blue}{000000000000000000000}10000000011110111001
14 :             000011000000001\textcolor{blue}{000000000000000000000}1000000001001111111
15 :              00011000000001\textcolor{blue}{000000000000000000000}111111111100110001
16 :               0011000000001\textcolor{blue}{000000000000000000000}11111101111111000
17 :                011000000001\textcolor{blue}{000000000000000000000}1000011100100100
18 :                 11000000001\textcolor{blue}{000000000000000000000}100001010010011
19 :                  1000000001\textcolor{blue}{000000000000000000000}10000111111111
20 :                   111111111\textcolor{blue}{000000000000000000000}1000010010011
21 :                    11110001\textcolor{blue}{000000000000000000000}111111001001
22 :                     0010001\textcolor{blue}{000000000000000000000}10101111111
23 :                      010001\textcolor{blue}{000000000000000000000}1111100000
24 :                       11111\textcolor{blue}{000000000000000000000}100110000
25 :                        0011\textcolor{blue}{000000000000000000000}10011000
26 :                         0111111111111111111111111111100
27 :                          10111100011010001101000001010
28 :                           110010001111000111100000111
29 :                            1001000101100010110000011
30 :                             11111111111111111000001
31 :                              001110000000000100000
32 :                               0101000000000011111
33 :                                11100000000001010
34 :                                 110000000000111
35 :                                  1000000000010
36 :                                   00000000001
37 :                                    000000000
38 :                                     0000000
39 :                                      00000
40 :                                       000
41 :                                        1
\end{BVerbatim}
\end{center}

\normalsize
% As before, red color represents the prefix and the postfix that are generated randomly. Green color represents the middle linear substring. The big square of zeros due to the linearity of the middle string is in blue color. The sequence is used to initialize the table so it is identical to the row indexed by $1$ below. Row $0$ contains only unit elements.
In this case the generating vector is $(1,0,0,1,1)=(c_0,c_1,c_2,c_3,c_4)$.
% Again, $c_4=1$ to ensure the generating vector is not trivial.
The leftmost index of the linear subsequence is $30$, and the rightmost index of linear subsequence is $50$. The shape of $S$ is a square and $i_0=30$, $i_1=50$, $j_0=5$, \emph{and} $j_1=j_0+(i_1-i_0)=25$ as it can be seen as well from the visual aid. The prefix random string is given by $x_0x_1\cdots x_{29}=101100000010101111011010110101$. The postfix random string is $x_{50}x_{51}\cdots x_{80}=0010101111011001100110000000100$. The middle linear substring is given by $x_{30}\cdots x_{49}=10010001111010110010$.
% We can verify that
% \begin{align*}
% x_{30}&=c_3x_{29}+c_2x_{28}+c_1x_{27}+c_0x_{26}=1(1)+0(0)+0(1)+1(0)=1\\
% x_{31}&=x_{30}+x_{27}=1+1=0\\
% \vdots&\qquad\vdots\\
% x_{49}&=x_{48}+x_{45}=1+1=0.
%\end{align*}
% \normalsize

\section{Conclusion and further work}\label{sect:concl}

We believe that there are still more relations to be found and to be coded in order to avoid the computation of determinants, and this is currently under study. An ultimate goal is to get rid entirely of the evaluations of large determinants by proving and generalizing Conjecture \ref{conj:deteq}.
How would the linear dependency vanish as the radius gets larger or synonymously how far does it propagate around the center?
% Why do we need to add an extra condition for $k=7$?
Can we further enlarge the radius by adding new conditions for $k>7$?
% Would we need to consider the interior $\mathbf{G}_{i,j,k-2}$ of the neighbourhood $\mathbf{G}_{i,j,k}$?
%
% The left-to-right sweep phase loop at line (\ref{algSweepPhaseStart}) in Algorithm \ref{alg:dynamicprog} is embarrassingly parallelizable. A pool of threads must be established to avoid the overhead of thread creation and destruction in the left-to-right sweep phase, and we are working on this.

It would be interesting to adapt our algorithm to output the generating vector and compare it to efficient implementations of the Berlekamp-Massey algorithm. We would need to stop at the level containing a long run of zeros and use the information of the row preceding this one to solve efficiently the linear system for the generating vector using adjugate matrices.

It is known that Berlekamp-Massey algorithm is virtually the same as the extended Euclidean algorithm for polynomials over finite field. Could we find a similar equivalence to our algorithm for problems involving Bezout identities that express linear dependencies among elements in fields?

% Study the statistical distribution of counts as a signature of
% certain kind of linearly filled sequences.

% Could a similar idea be used for other functions like the permanent of Hankel matrices or determinants of other important kind of matrices like Vandermonde or Wronskian? Dynamic programming coupled with methods from statistical machine learning could possibly be useful in identifying patterns.

Our dynamic approach can be easily adapted to multiple and combined linear feedback shift registers. Further research also includes to analyze the case of non-linear feedback shift register by linearizing the generator; more precisely, linearizing a non-linear boolean feedback function pertains to add constraints which are reflected in the determinant identities.

\bibliographystyle{plain}
\newcommand{\SortNoop}[1]{}

\appendix
% \section
\noindent
{\Large {\bf Appendix: Run times and distribution of counts}}\label{sect:empres}

\vspace{0.5cm}

In order to compare in practice the running times between the trivial method and our new method, we generate sequences of length $n$ that we linearly filled. In order to compute determinants of large Hankel matrices whenever necessary, we do not use the Levinson-Durbin algorithm \cite{BojanczykETAL1995}, \cite{Musicus1988} that can be adapted to Hankel matrices instead of Toeplitz matrices. We created an extremely fast C/C++ low-level module to compute determinants over $\mathbb{F}_{2}$ in order to do not rely on any external libraries. Our module to compute determinants over $\mathbf{F}_{2}$ is quite faster than NTL; it however only applies to binary matrices. We recall that one of our future goals is to avoid such computation of determinants of large matrices, and only use local information.
% We compare the running time in two different scenarios and
We look at typical worst-case instances when the linear subsequence is ``buried'' between two long random sequences serving as a prefix and a postfix. The prefix random string together with the generator vector are used to built the linear subsequence in the middle. The prefix random string must be at least as long as the length of the generating vector to be used as initial data.
We also consider typical easy-case instances when there is no random postfix sequence and when the length of the random prefix sequence is the same as the generating vector.

For our accelerated dynamic algorithm, we give the distribution of counts of the number of times, with respect to the number of entries in the table, that we branch to Proposition \ref{prop:1step}, Theorems \ref{thm:carreauxdezeros} or \ref{thm:noese}, Conjecture \ref{conj:deteq} or to an explicit computation (where Levinson-Durbin could be used for instance).
The time to verify that the tables obtained from the trivial and our accelerated methods coincide is not taken into account; we must check this because at this time we cannot prove the validity of Conjecture \ref{conj:deteq} and/or further enhanced it. We also parallelize both the naive and accelerated algorithms. We notice that our accelerated algorithm on a single core is faster than the trivial algorithm on 160 cores for instance as shown here in the following tables. For $n=2^{14}=16384$, the time to run the trivial algorithm is prohibitive and we did not run the trivial algorithm for length $n=2^{14}$. The meanings of the abbreviations in the following tables are: \textbf{Tri.~S.T.} for trivial algorithm single threaded,
\textbf{Tri.~M.T.} for trivial algorithm multi threaded, \textbf{Acc.~S.T.} for accelerated algorithm single threaded, and \textbf{Acc.~M.T.} for accelerated algorithm multi threaded. Roughly speaking, a thread is a core. All threads share a unique space in memory.
\begin{center}
\begin{longtable}{|c|c|c|c|}
\hline \multicolumn{4}{|c|}{\uppercase{Excerpt of running times for} $n=4096$ (in milliseconds)}\\ \hline
\hline \textbf{Tri.~S.T.} & \textbf{Tri.~M.T.} & \textbf{Acc.~S.T.} & \textbf{Acc.~M.T.}\\ \hline
\endfirsthead
\hline \textbf{Tri.~S.T.} & \textbf{Tri.~M.T.} & \textbf{Acc.~S.T.} & \textbf{Acc.~M.T.}\\ \hline
\endhead
\hline \multicolumn{4}{c}{Continued on next page}
\endfoot
\endlastfoot
\hline
$15931964.434875$ & $793513.102411$ & $209495.875914$ & $79493.525290$\\ \hline
$15930582.584026$ & $793131.331561$ & $212010.715863$ & $79784.405378$\\ \hline
$15931070.671137$ & $793029.664463$ & $211231.649798$ & $79538.533773$\\ \hline\hline
$\approx 4$h $30$min & $\approx14$min& $\approx 4$min & $\approx 1$min $33$sec\\ \hline
\end{longtable}
\begin{longtable}{|p{4cm}|p{4cm}|}
\hline \multicolumn{2}{|c|}{\uppercase{Excerpt of running times for} $n=16384$ (in milliseconds)}\\ \hline
\hline \textbf{Acc.~S.T.} & \textbf{Acc.~M.T.}\\ \hline
\endfirsthead
\hline \textbf{Acc.~S.T.} & \textbf{Acc.~M.T.}\\ \hline
\endhead
\hline \multicolumn{2}{c}{Continued on next page}
\endfoot
\endlastfoot
\hline
$95397251.744779$ & $18823959.088013$\\ \hline
$92977447.400304$ & $18820690.826587$\\ \hline
$93372757.004548$ & $18817978.092718$\\ \hline\hline
$\approx 26$h $30$min & $\approx 5$h $36$min\\ \hline
\end{longtable}
\end{center}

The hardware specification for the computer we used is: Intel(R) Core(TM) i7-8700 CPU @{} 3.20Ghz, 160 cores, 1TB RAM.

We coded Algorithm \ref{alg:dynamicprog} over $\mathbb{F}_{2}$, and a compile switch can be enable to avoid using the library NTL or to use it. Our code is available at \url{https://github.com/63EA13D5/}. For the worst-case instances, we generate the sequences using the following parameters:
\begin{enumerate}
\item[1.] Elements indexed from $0$ to ${7n}/{16}$ inclusively are generated randomly.
\item[2.] Elements indexed from ${7n}/{16} + 1$ to ${9n}/{16}$ inclusively are linearly filled using a non-trivial generating vector of length $d={n}/{8}$. The generating vector is randomly created and the rightmost coordinate is set to the unit element in base field.
\item[3.] Elements indexed from ${9n}/{16}$ to $n$ inclusively are generated randomly.
\end{enumerate}
The ratio of the number of entries in the big square over the number of entries for the table of a given instance is about ${1}/{16}$ up to a few decimals. For each value of $n$, a sample of sequences is used to estimate the running time by evaluating the averages over the sample, one average for the trivial and one average for our method. For comparison, each method is applied to a sequence from the sample. The ratios of the averages of the new method by the trivial are given. We generate a sample of $1000$ linearly filled vectors as described above for each value of $n$. Zero counts are not shown in the tables.

% In practice, to detect linearity or to solve backward for the generating vector, we only need to stop at the first level that contains a long run of zeros.

\footnotesize

\begin{center}
\begin{longtable}{|l|r||l|r|}
\caption{Time complexity and distribution of counts\---hard instances.}
\label{tab:res:4096H}\\
% \hline
\endfirsthead
\endhead
\hline \multicolumn{4}{c}{Continued on next page}
\endfoot
\endlastfoot
\hline
\multicolumn{2}{|l|}{Sample size} & \multicolumn{2}{|l|}{$10140$}\\ \hline
\multicolumn{2}{|l|}{Sequence length} & \multicolumn{2}{|l|}{$4096$} \\ \hline
\multicolumn{2}{|l|}{Generating vector length} & \multicolumn{2}{|l|}{$256$} \\ \hline
\multicolumn{2}{|l|}{Subsequence leftmost index} & \multicolumn{2}{|l|}{$1792$} \\ \hline
\multicolumn{2}{|l|}{Subsequence rightmost index} & \multicolumn{2}{|l|}{$2304$} \\ \hline
\multicolumn{2}{|l|}{Number of entries} & \multicolumn{2}{|l|}{$4192256$} \\ \hline
\hline
\multicolumn{2}{|l|}{Average time for accelerated method (ms)} & \multicolumn{2}{|l|}{$482211.336405$\quad\textrm{(i)}}\\ \hline
\multicolumn{2}{|l|}{Average time for trivial method (ms)} & \multicolumn{2}{|l|}{$39627789.122209$\quad\textrm{(ii)}} \\ \hline
\multicolumn{2}{|l|}{Ratio \textrm{i}/\textrm{ii}}& \multicolumn{2}{|l|}{$0.012169$} \\ \hline
\hline
\multicolumn{2}{|l|}{Average counts NSEW} & \multicolumn{2}{|l|}{$1720642.408481$} \\ \hline
\multicolumn{2}{|l|}{Average counts square filling} & \multicolumn{2}{|l|}{$925814.212032$} \\ \hline
\multicolumn{2}{|l|}{Average counts direct} & \multicolumn{2}{|l|}{$60875.169231$} \\ \hline
\hline
\multicolumn{2}{|l|}{Average counts $2\times 2$ grid} & \multicolumn{2}{|l|}{$467181.981164$} \\ \hline
\multicolumn{2}{|l|}{Average counts $3\times 3$ grid} & \multicolumn{2}{|l|}{$382111.856114$} \\ \hline
\multicolumn{2}{|l|}{Average counts $4\times 4$ grid} & \multicolumn{2}{|l|}{$265305.747436$} \\ \hline
\multicolumn{2}{|l|}{Average counts $5\times 5$ grid} & \multicolumn{2}{|l|}{$169767.046746$} \\ \hline
\multicolumn{2}{|l|}{Average counts $6\times 6$ grid} & \multicolumn{2}{|l|}{$94158.297732$} \\ \hline
\multicolumn{2}{|l|}{Average counts $7\times 7$ grid} & \multicolumn{2}{|l|}{$51134.827416$} \\ \hline
\hline
Average counts $2$-cross & $3772.531657$ & Average counts $3$-cross & $2480.830473$ \\ \hline
Average counts $4$-cross & $1473.981558$ & Average counts $5$-cross & $821.684813$ \\ \hline
Average counts $6$-cross & $4415.502071$ & Average counts $7$-cross & $4449.331460$ \\ \hline
Average counts $8$-cross & $16683.110947$ & Average counts $9$-cross & $9334.250197$ \\ \hline
Average counts $10$-cross & $5165.383136$ & Average counts $11$-cross & $2828.874063$ \\ \hline
Average counts $12$-cross & $1536.936785$ & Average counts $13$-cross & $828.868540$ \\ \hline
Average counts $14$-cross & $444.807101$ & Average counts $15$-cross & $237.420710$ \\ \hline
Average counts $16$-cross & $125.906312$ & Average counts $17$-cross & $67.308679$ \\ \hline
Average counts $18$-cross & $35.672189$ & Average counts $19$-cross & $18.771893$ \\ \hline
Average counts $20$-cross & $9.915779$ & Average counts $21$-cross & $5.115089$ \\ \hline
Average counts $22$-cross & $2.713708$ & Average counts $23$-cross & $1.373570$ \\ \hline
Average counts $24$-cross & $0.696746$ & Average counts $25$-cross & $0.388955$ \\ \hline
Average counts $26$-cross & $0.246154$ & Average counts $27$-cross & $0.098422$ \\ \hline
Average counts $28$-cross & $0.054734$ & Average counts $29$-cross & $0.043590$ \\ \hline
Average counts $30$-cross & $0.020809$ & Average counts $31$-cross & $0.009369$ \\ \hline
Average counts $32$-cross & $0.006312$ & Average counts $256$-cross & $21.695464$ \\ \hline
Average counts $257$-cross & $53.271203$ & Average counts $258$-cross & $70.017061$ \\ \hline
Average counts $259$-cross & $74.697732$ & Average counts $260$-cross & $67.246943$ \\ \hline
Average counts $261$-cross & $56.079487$ & Average counts $262$-cross & $45.345661$ \\ \hline
Average counts $263$-cross & $35.773866$ & Average counts $264$-cross & $26.859369$ \\ \hline
Average counts $265$-cross & $20.313807$ & Average counts $266$-cross & $14.738166$ \\ \hline
Average counts $267$-cross & $9.678205$ & Average counts $268$-cross & $7.546055$ \\ \hline
Average counts $269$-cross & $5.817850$ & Average counts $270$-cross & $3.680079$ \\ \hline
Average counts $271$-cross & $2.569625$ & Average counts $272$-cross & $1.961440$ \\ \hline
Average counts $273$-cross & $1.428895$ & Average counts $274$-cross & $1.082840$ \\ \hline
Average counts $275$-cross & $0.923471$ & Average counts $276$-cross & $0.490335$ \\ \hline
Average counts $277$-cross & $0.383432$ & Average counts $278$-cross & $0.439250$ \\ \hline
Average counts $279$-cross & $0.247929$ & Average counts $280$-cross & $0.082840$ \\ \hline
Average counts $281$-cross & $0.027811$ & Average counts $282$-cross & $0.027811$ \\ \hline
Average counts $285$-cross & $0.028205$ & Average counts $286$-cross & $0.028205$ \\ \hline
Average counts $287$-cross & $0.028402$ & Average counts $288$-cross & $0.028402$ \\ \hline
Average counts $293$-cross & $0.028994$ & Average counts $294$-cross & $0.028994$ \\ \hline
\hline
\multicolumn{2}{|r|}{Sum over all average counts} & \multicolumn{2}{|l|}{$4192256$} \\
\hline
\end{longtable}
\end{center}

\begin{center}
\begin{longtable}{|l|r||l|r|}
\caption{Time complexity and distribution of counts\---easy instances}
\label{tab:res:4096E}\\
% \hline
\endfirsthead
\endhead
\hline \multicolumn{4}{c}{Continued on next page}
\endfoot
\endlastfoot
\hline
\multicolumn{2}{|l|}{Sample size} & \multicolumn{2}{|l|}{$10140$}\\ \hline
\multicolumn{2}{|l|}{Sequence length} & \multicolumn{2}{|l|}{$4096$} \\ \hline
\multicolumn{2}{|l|}{Generating vector length} & \multicolumn{2}{|l|}{$256$} \\ \hline
\multicolumn{2}{|l|}{Subsequence leftmost index} & \multicolumn{2}{|l|}{$256$} \\ \hline
\multicolumn{2}{|l|}{Subsequence rightmost index} & \multicolumn{2}{|l|}{$4096$} \\ \hline
\multicolumn{2}{|l|}{Number of entries} & \multicolumn{2}{|l|}{$4192256$} \\ \hline
\hline
\multicolumn{2}{|l|}{Average time for accelerated method (ms)} & \multicolumn{2}{|l|}{$4331.896007$\quad\textrm{(i)}}\\ \hline
\multicolumn{2}{|l|}{Average time for trivial method (ms)} & \multicolumn{2}{|l|}{$25765730.530170$\quad\textrm{(ii)}} \\ \hline
\multicolumn{2}{|l|}{Ratio \textrm{i}/\textrm{ii}}& \multicolumn{2}{|l|}{$0.000168$} \\ \hline
\hline
\multicolumn{2}{|l|}{Average counts NSEW} & \multicolumn{2}{|l|}{$407852.065385$} \\ \hline
\multicolumn{2}{|l|}{Average counts square filling} & \multicolumn{2}{|l|}{$3419936.066568$} \\ \hline
\multicolumn{2}{|l|}{Average counts direct} & \multicolumn{2}{|l|}{$19444.707988$} \\ \hline
\hline
\multicolumn{2}{|l|}{Average counts $2\times 2$ grid} & \multicolumn{2}{|l|}{$106227.368146$} \\ \hline
\multicolumn{2}{|l|}{Average counts $3\times 3$ grid} & \multicolumn{2}{|l|}{$86949.429093$} \\ \hline
\multicolumn{2}{|l|}{Average counts $4\times 4$ grid} & \multicolumn{2}{|l|}{$60390.460256$} \\ \hline
\multicolumn{2}{|l|}{Average counts $5\times 5$ grid} & \multicolumn{2}{|l|}{$38652.743195$} \\ \hline
\multicolumn{2}{|l|}{Average counts $6\times 6$ grid} & \multicolumn{2}{|l|}{$21438.953156$} \\ \hline
\multicolumn{2}{|l|}{Average counts $7\times 7$ grid} & \multicolumn{2}{|l|}{$11646.030572$} \\ \hline
\hline
Average counts $2$-cross & $3772.586193$ & Average counts $3$-cross & $2479.999310$ \\ \hline
Average counts $4$-cross & $1473.537081$ & Average counts $5$-cross & $821.756805$ \\ \hline
Average counts $6$-cross & $1341.515483$ & Average counts $7$-cross & $1185.020414$ \\ \hline
Average counts $8$-cross & $3878.167554$ & Average counts $9$-cross & $2161.438363$ \\ \hline
Average counts $10$-cross & $1191.903748$ & Average counts $11$-cross & $651.711736$ \\ \hline
Average counts $12$-cross & $351.648225$ & Average counts $13$-cross & $189.478107$ \\ \hline
Average counts $14$-cross & $102.336391$ & Average counts $15$-cross & $54.392998$ \\ \hline
Average counts $16$-cross & $28.796746$ & Average counts $17$-cross & $15.180178$ \\ \hline
Average counts $18$-cross & $8.078008$ & Average counts $19$-cross & $4.166469$ \\ \hline
Average counts $20$-cross & $2.130868$ & Average counts $21$-cross & $1.140039$ \\ \hline
Average counts $22$-cross & $0.562821$ & Average counts $23$-cross & $0.254734$ \\ \hline
Average counts $24$-cross & $0.159073$ & Average counts $25$-cross & $0.104832$ \\ \hline
Average counts $26$-cross & $0.051972$ & Average counts $27$-cross & $0.035207$ \\ \hline
Average counts $28$-cross & $0.019428$ & Average counts $29$-cross & $0.002860$ \\ \hline
Average counts $1792$-cross & $0.645759$ & Average counts $1793$-cross & $0.501775$ \\ \hline
Average counts $1794$-cross & $0.350099$ & Average counts $1795$-cross & $0.212623$ \\ \hline
Average counts $1796$-cross & $0.121893$ & Average counts $1797$-cross & $0.073373$  \\ \hline
Average counts $1798$-cross & $0.039645$ & Average counts $1799$-cross & $0.022091$ \\ \hline
Average counts $1800$-cross & $0.015385$ & Average counts $1801$-cross & $0.007298$ \\ \hline
Average counts $1802$-cross & $0.005128$ & Average counts $1803$-cross & $0.001578$ \\ \hline
Average counts $1804$-cross & $0.002564$ & Average counts $1805$-cross & $0.000394$ \\ \hline
Average counts $1806$-cross & $0.000197$ & Average counts $1812$-cross & $0.000197$ \\ \hline
\hline
\multicolumn{2}{|r|}{Sum over all average counts} & \multicolumn{2}{|l|}{$4192256$} \\
\hline
\end{longtable}
\end{center}

\normalsize

We observe that we are about $83$ times faster on typical hard instances and about $5947$ times faster on easy ones. In practice, to detect linearity or to solve backward for the generating vector, we only need to stop at the first level that contains a long run of zeros.

\end{document}